\newtheorem{theorem}{Theorem}
\begin{document}


\title{Recovering complete positivity of non-Markovian quantum dynamics with Choi-proximity regularization}

\date{\today}

\author{Antonio D'Abbruzzo}
\email[Corresponding author: ]{antonio.dabbruzzo@sns.it}
\affiliation{Scuola Normale Superiore, 56126 Pisa, Italy}

\author{Donato Farina}
\affiliation{ICFO-Institut de Ciencies Fotoniques, The Barcelona Institute of Science and Technology, Castelldefels, 08860 Barcelona, Spain}
\affiliation{Physics Department E. Pancini - Università degli Studi di Napoli Federico II, Complesso Universitario Monte S. Angelo - Via Cintia - I-80126 Napoli, Italy.}

\author{Vittorio Giovannetti}
\affiliation{Scuola Normale Superiore, NEST, and Istituto Nanoscienze-CNR, 56126 Pisa, Italy}

\begin{abstract}
    A relevant problem in the theory of open quantum systems is the lack of complete positivity of dynamical maps obtained after weak-coupling approximations, a famous example being the Redfield master equation.
    A number of approaches exist to recover well-defined evolutions under additional Markovian assumptions, but much less is known beyond this regime.
    Here we propose a numerical method to cure the complete-positivity violation issue while preserving the non-Markovian features of an arbitrary original dynamical map.
    The idea is to replace its unphysical Choi operator with its closest physical one, mimicking recent work on quantum process tomography.
    We also show that the regularized dynamics is more accurate in terms of reproducing the exact dynamics: this allows to heuristically push the utilization of these master equations in moderate coupling regimes, where the loss of positivity can have relevant impact.
\end{abstract}

\maketitle


\section{Introduction} \label{sec:intro}

Quantum master equations constitute one of the main tools to describe quantum systems interacting with an external environment~\cite{Breuer2002}.
In their most general form they appear as matrix convolutional differential equations which are almost impossible to solve due to the complex dependence of their kernel on the microscopic details of the system~\cite{Breuer2002}.
In order to obtain more manageable equations, weak-coupling approximations are usually invoked.
However, it is well known that these manipulations can lead to a mathematical and physical disappointment: in general the resulting evolutions violate the complete positivity requirement, which basically means that they do not necessarily send quantum states into quantum states.
The most famous example of this phenomenon is given by the Redfield equation~\cite{Bloch1957,Redfield1965,Dumcke1979}, whose interplay between accuracy and unphysical predictions is still debated today~\cite{Purkayastha2016,Hartmann2020,Soret2022,Benatti2022,Tupkary2022,Tupkary2023}.

A number of approaches exist to ``regularize'' the Redfield equation into a completely positive evolution~\cite{Schaller2008,Kirsanskas2018,Farina2019,McCauley2020,Mozgunov2020,Nathan2020,Potts2021,Becker2021,Trushechkin2021,Davidovic2022,DAbbruzzo2023}, but almost all of them require additional Markovian assumptions.
This means that they should be used only in those situations where the coupling is sufficiently weak with respect to the spectral width of the environment, so that excitations injected into the environment cannot be absorbed back into the system.
Much less is known in those cases where coupling strength and spectral width are comparable in magnitude and non-Markovian effects start to arise.
In this regime we do not expect weak-coupling master equations to be extremely accurate in terms of reproducing the exact dynamics~\cite{Hartmann2020}, but, if a better approach to the problem is missing, it is still relevant to have a well-defined master equation to work with.

\begin{figure}
    \centering
    \includegraphics[scale=0.6]{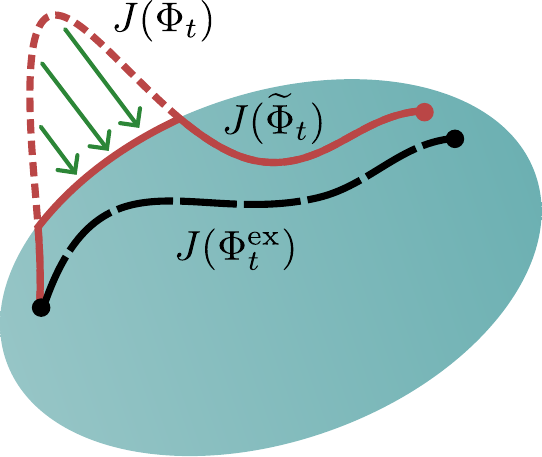}
    \caption{
        Choi-proximity regularization of a dynamical map $\Phi_t$ obtained from an approximation of the exact dynamics $\Phi_t^{\mathrm{ex}}$ (see Sec.~\ref{sec:notation} for the notations).
        In those time regions where the Choi operator $J(\Phi_t)$ exits the space of physical Choi operators (depicted as a colored oval), we project it back on such a space obtaining a CPT map $\widetilde{\Phi}_t$.
        Thanks to the convexity of the physical Choi space, the result is closer to the exact dynamics $\Phi_t^{\mathrm{ex}}$.
    }
    \label{fig:cartoon}
\end{figure}

In this paper we propose a numerical approach, called \textit{Choi-proximity regularization}, that can be applied to regularize any unphysical dynamical evolution.
The idea is to work on the corresponding Choi operator~\cite{Watrous2018} and project it onto the space of physical Choi states, which are associated with proper trace-preserving completely positive evolutions (see Fig.~\ref{fig:cartoon}).
The adoption of such a projection was inspired by recent work in quantum process tomography, where it is used to recover a proper quantum channel from a set of measurements which can possibly produce an unphysical Choi operator~\cite{Knee2018,Huang2020,Stepney2022}.
This method guarantees that the obtained dynamics does not provide unphysical predictions, while improving the accuracy (with respect to the generally unknown exact solution) in those time regions where the original map violates the requirements of physical consistency.
Moreover, no additional assumptions are made on the parameters of the model, which means that the result is still able to display eventual non-Markovian features of the original evolution.

The paper is structured as follows.
In Sec.~\ref{sec:notation} we set the stage by introducing the notation and reminding known facts about quantum channels, Choi operators, and quantum master equations for open systems.
In Sec.~\ref{sec:regularization} we discuss the Choi-proximity regularization procedure as a way to regularize any unphysical master equation.
Then, in Sec.~\ref{sec:examples}, we move on to examples of applications to an exactly-solvable qubit system and to a more complex spin-boson model.
Finally, in Sec.~\ref{sec:conclusions} we draw our conclusions.


\section{Notation and basic facts} \label{sec:notation}

Let us start with a review of basic facts about quantum channels, Choi operators, dynamical maps in open systems, and quantum master equations.
These concepts form the necessary background for what is discussed in the rest of the paper.


\subsection{Quantum channels and Choi operators}

Let us consider a quantum system $S$ represented by a complex Hilbert space $\mathcal{H}$ of finite dimension $d < \infty$.
Let $\mathrm{L}(\mathcal{H})$ be the vector space of linear operators of the form $X: \mathcal{H} \to \mathcal{H}$, endowed with the Frobenius inner product $\langle X, Y \rangle \coloneqq \Tr[X^\dagger Y]$.
Two important subsets of $\mathrm{L}(\mathcal{H})$ are the real vector space of Hermitian operators,
\begin{equation}
    \mathrm{Herm}(\mathcal{H}) \coloneqq \qty{X \in \mathrm{L}(\mathcal{H}) : X^\dagger = X},
\end{equation}
and the convex cone of positive semidefinite operators,
\begin{equation}
    \mathrm{Pos}(\mathcal{H}) \coloneqq \qty{A^\dagger A : A \in \mathrm{L}(\mathcal{H})} \subset \mathrm{Herm}(\mathcal{H}).
\end{equation}

Now, let $\mathrm{T}(\mathcal{H})$ be the vector space of linear superoperators of the form $\Phi: \mathrm{L}(\mathcal{H}) \to \mathrm{L}(\mathcal{H})$.
Such a map is called:
\textit{Hermitian-preserving} when $X \in \mathrm{Herm}(\mathcal{H})$ implies $\Phi(X) \in \mathrm{Herm}(\mathcal{H})$;
\textit{completely positive} when, for any choice of an ancillary Hilbert space $\mathcal{A}$ and for any choice of $A \in \mathrm{Pos}(\mathcal{H} \otimes \mathcal{A})$, one has $(\Phi \otimes \mathbbm{1}_{\mathrm{L}(\mathcal{H})})(A) \in \mathrm{Pos}(\mathcal{H} \otimes \mathcal{A})$, where $\mathbbm{1}_\mathcal{X}$ stands for the identity map on $\mathcal{X}$;
\textit{trace-preserving} when for any choice of $X \in \mathrm{L}(\mathcal{H})$ one has $\Tr[\Phi(X)] = \Tr[X]$.
Notice that a completely positive map is automatically Hermitian-preserving, while the converse is not true.

An important result is that there exists a bijection between $\mathrm{T}(\mathcal{H})$ and a subset of $\mathrm{L}(\mathcal{H} \otimes \mathcal{H})$.
Fix an orthonormal basis $\{e_n\}$ of $\mathcal{H}$, so that $\{E_{nm}\}$ with $E_{nm} \coloneqq e_n e_m^\dagger$ defines an orthonormal basis of $\mathrm{L}(\mathcal{H})$.
Then the map $J: \mathrm{T}(\mathcal{H}) \to \mathrm{L}(\mathcal{H} \otimes \mathcal{H})$ defined by
\begin{equation} \label{eq:def-choi}
    J(\Phi) \coloneqq \frac{1}{d} \sum_{n,m=1}^d \Phi(E_{nm}) \otimes E_{nm}
\end{equation}
is a bijection, which is inverted by
\begin{equation}\label{eq:inverse_choi}
    \Phi(X) = d \Tr_2[J(\Phi)(\mathbbm{1}_{\mathcal{H}} \otimes X^T)],
    \quad
    X \in \mathrm{L}(\mathcal{H}),
\end{equation}
where $\Tr_2 \equiv \mathbbm{1}_{\mathrm{L}(\mathcal{H})} \otimes \Tr$ is the partial trace over the second factor of the tensor product.
The map $J$ is known as \textit{Choi-Jamio{\l}kowski isomorphism}, and $J(\Phi)$ is called \textit{Choi operator} associated with $\Phi$.
It can be used to characterize the sets of superoperators defined above~\cite{Watrous2018}.
\begin{theorem}\label{thm:characterization_choi}
    $\Phi \in \mathrm{T}(\mathcal{H})$ is Hermitian-preserving if and only if $J(\Phi) \in \mathrm{Herm}(\mathcal{H} \otimes \mathcal{H})$, and it is completely positive if and only if $J(\Phi) \in \mathrm{Pos}(\mathcal{H} \otimes \mathcal{H})$.
    Moreover, $\Phi$ is trace-preserving if and only if
    \begin{equation}
        \Tr_1 J(\Phi) = \frac{\mathbbm{1}_\mathcal{H}}{d},
    \end{equation}
    where $\Tr_1 \equiv \Tr \otimes \mathbbm{1}_{\mathrm{L}(\mathcal{H})}$ is the partial trace over the first factor of the tensor product.
\end{theorem}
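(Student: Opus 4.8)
The plan is to prove the three characterizations independently, in each case leaning on the linearity of $J$, the inversion formula~\eqref{eq:inverse_choi}, and the fact that $\{E_{nm}\}$ is an orthonormal basis of $\mathrm{L}(\mathcal{H})$, so that a linear statement about superoperators need only be checked on the $E_{nm}$.

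\emph{Hermitian preservation.} I would prove both implications by direct computation. If $\Phi$ is Hermitian-preserving then, decomposing an arbitrary $X$ as a complex combination of two Hermitian operators, one obtains $\Phi(X)^\dagger = \Phi(X^\dagger)$ for all $X$; since $E_{nm}^\dagger = E_{mn}$ this gives $\Phi(E_{nm})^\dagger = \Phi(E_{mn})$, and plugging this into~\eqref{eq:def-choi} and relabelling the summation indices yields $J(\Phi)^\dagger = J(\Phi)$. Conversely, assume $J(\Phi)$ is Hermitian and take $X$ Hermitian (hence $X^T$ Hermitian as well); taking the adjoint of $\Phi(X) = d\,\Tr_2[J(\Phi)(\mathbbm{1}_\mathcal{H}\otimes X^T)]$ and using that the partial trace commutes with the adjoint, the hypothesis $J(\Phi)^\dagger = J(\Phi)$, and the elementary identity $\Tr_2[(\mathbbm{1}_\mathcal{H}\otimes A)Z] = \Tr_2[Z(\mathbbm{1}_\mathcal{H}\otimes A)]$ (both sides equal $\sum_{ij}\langle e_j|A|e_i\rangle\,Z^{ij}$ when $Z = \sum_{ij}Z^{ij}\otimes|e_i\rangle\langle e_j|$), one finds $\Phi(X)^\dagger = \Phi(X)$.

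\emph{Complete positivity.} The forward direction needs only one test operator: with ancilla $\mathcal{A} = \mathcal{H}$, take $\Omega\Omega^\dagger$ with $\Omega \coloneqq \sum_n e_n\otimes e_n \in \mathcal{H}\otimes\mathcal{H}$, so that $\Omega\Omega^\dagger = \sum_{nm}E_{nm}\otimes E_{nm}$ is positive semidefinite and $(\Phi\otimes\mathbbm{1}_{\mathrm{L}(\mathcal{H})})(\Omega\Omega^\dagger) = d\,J(\Phi)$, forcing $J(\Phi)\in\mathrm{Pos}(\mathcal{H}\otimes\mathcal{H})$. The converse is the only step that requires genuine care: I would spectrally decompose $J(\Phi) = \sum_k v_k v_k^\dagger$, associate to each $v_k$ the operator $K_k\in\mathrm{L}(\mathcal{H})$ defined by $v_k = (K_k\otimes\mathbbm{1}_\mathcal{H})\,\Omega$, and verify by matching coefficients on the basis $\{E_{nm}\}$ that $\Phi$ acquires the Kraus form $\Phi(X) = d\sum_k K_k X K_k^\dagger$ — the factor $1/d$ in~\eqref{eq:def-choi} being the one bit of bookkeeping to watch. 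Complete positivity then follows immediately: for any ancilla $\mathcal{A}$ and any $A\in\mathrm{Pos}(\mathcal{H}\otimes\mathcal{A})$ one has $(\Phi\otimes\mathbbm{1}_{\mathrm{L}(\mathcal{A})})(A) = d\sum_k (K_k\otimes\mathbbm{1}_\mathcal{A})\,A\,(K_k\otimes\mathbbm{1}_\mathcal{A})^\dagger\in\mathrm{Pos}(\mathcal{H}\otimes\mathcal{A})$.

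\emph{Trace preservation.} Here I would simply compute $\Tr_1 J(\Phi) = \frac{1}{d}\sum_{nm}\Tr[\Phi(E_{nm})]\,E_{nm}$ and note that the linear functionals $\Tr\circ\Phi$ and $\Tr$ coincide on all of $\mathrm{L}(\mathcal{H})$ if and only if they agree on the basis, i.e. $\Tr[\Phi(E_{nm})] = \Tr[E_{nm}] = \delta_{nm}$, which is precisely the statement $\Tr_1 J(\Phi) = \mathbbm{1}_\mathcal{H}/d$. Overall, everything reduces to linear algebra once~\eqref{eq:inverse_choi} is available; the only place where more than routine manipulation enters is the construction of a valid Kraus decomposition in the converse of the complete-positivity claim, and that is where I would spend most of the effort.
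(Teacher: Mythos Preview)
Your proof is correct and follows the standard route (essentially the argument in Watrous's text, which the paper cites). However, note that the paper does \emph{not} supply its own proof of this theorem: it is stated as a known characterization with a reference to~\cite{Watrous2018}, so there is nothing in the paper to compare your argument against. Your treatment --- direct computation for the Hermitian-preserving and trace-preserving parts, and a Kraus decomposition extracted from the spectral decomposition of $J(\Phi)$ for the complete-positivity converse --- is exactly the textbook approach and would serve as a self-contained substitute for the citation.
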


Particularly important in quantum physics are the \textit{CPT maps} or \textit{quantum channels}, defined as those superoperators that are both completely positive and trace-preserving.
It can be argued that a physical evolution that sends quantum states into quantum states must be represented by a quantum channel~\cite{Breuer2002}.
We introduce the symbol $\mathfrak{J}(\mathcal{H})$ for the set of \textit{physical Choi operators}, which are those operators that are Choi operators associated with some quantum channel acting on $\mathcal{H}$.
According to Theorem~\ref{thm:characterization_choi},
\begin{equation}
    \mathfrak{J}(\mathcal{H}) = \qty{P \in \mathrm{Pos}(\mathcal{H} \otimes \mathcal{H}) : \Tr_1 P = \frac{\mathbbm{1}_\mathcal{H}}{d}},
\end{equation}
and it can be proved that $\mathfrak{J}(\mathcal{H})$ is a closed and convex subset of $\mathrm{L}(\mathcal{H} \otimes \mathcal{H})$, being the intersection of the positive semidefinite cone with an affine subspace~\cite{Watrous2018}.


\subsection{Open systems and dynamical maps}

Suppose our system $S$ is coupled to an environment $E$, so that the Hamiltonian of the total system $S+E$ can be written as
\begin{equation}
    H = H_S \otimes \mathbbm{1}_E + \mathbbm{1}_S \otimes H_E + H_I,
\end{equation}
where $H_0 \equiv H_S \otimes \mathbbm{1}_E + \mathbbm{1}_S \otimes H_E$ contains free terms and $H_I$ is an interaction term between $S$ and $E$.
Let $\rho(t)$ be the density operator of $S$ at time $t$, and assume the initial density operator at time $t=0$ of the total system $S+E$ to be factorized as $\rho(0) \otimes \rho_E$, where $\rho_E$ is the initial state of the environment $E$.
Then the evolution of $S$ can be formally described by the following family $\{\Phi_t\}_{t \geq 0}$ of quantum channels:
\begin{equation}\label{eq:dynamical-maps}
    \rho(t) = \Phi_t(\rho(0)) = \Tr_E[U_t (\rho(0) \otimes \rho_E) U_t^\dagger],
\end{equation}
where $\Tr_E$ is the partial trace over $E$ and $U_t$ is the unitary time evolution operator of $S+E$.

An important property that dynamical maps $\Phi_t$ might have that we will need later is \textit{completely positive divisibility}, often abbreviated as \textit{CP-divisibility}~\cite{Wolf2008}.
A family $\{\Phi_t\}_{t \geq 0}$ of quantum channels is called CP-divisible when, for any $t \geq s$, there exist a quantum channel $\Lambda_{t,s}$ such that $\Phi_t = \Lambda_{t,s} \circ \Phi_s$.
This property is often taken as synonym of Markovianity (even though other points of view exist~\cite{Milz2019}).
If $\{\Phi_t\}$ is CP-divisible one has the Breuer-Laine-Piilo (BLP) condition~\cite{Breuer2009}, which says that given two density operators $\rho$ and $\sigma$,
\begin{equation}\label{eq:BLP}
    \frac{d}{dt} D_t(\rho, \sigma) \leq 0,
    \quad
    \forall t \geq 0,
\end{equation}
where $D_t(\rho, \sigma)$ is the Helstrom distinguishability measure between the states $\rho$ and $\sigma$~\cite{Watrous2018} after time $t$ has passed:
\begin{equation}\label{eq:distinguishability}
    D_t(\rho, \sigma) \coloneqq \frac{1}{2} \norm{\Phi_t(\rho) - \Phi_t(\sigma)}_1,
\end{equation}
with $\norm{X}_1 \coloneqq \Tr \sqrt{X^\dagger X}$ being the trace norm.
Violations of the BLP condition provide a clear signature of non-Markovianity.
Note that if $\Phi_t$ is CPT we must have that $\Phi_t(\rho)$ and $\Phi_t(\sigma)$ are both density matrices: as a consequence, $0 \leq D_t(\rho, \sigma) \leq 1$.


\subsection{Quantum master equations} \label{subsec:master-equations}

It is common to cast Eq.~\eqref{eq:dynamical-maps} as a convolutional master equation of the Nakajima-Zwanzig form
\begin{equation}\label{eq:nakajima}
    \partial_t \Phi_t = \int_0^t ds \, \mathcal{K}_{t-s} \circ \Phi_s,
\end{equation}
or, in some cases, in the time-local form
\begin{equation}\label{eq:time-local}
    \partial_t \Phi_t = \mathcal{L}_t \circ \Phi_t.
\end{equation}
In both cases, $\mathcal{K}_{t-s}$ and $\mathcal{L}_t$ consist in highly non-trivial expressions in terms of microscopic properties of the system, and approximations are needed in order to get results in practical scenarios~\cite{Breuer2002}.
For example, it is common to perform perturbative approximations in the ``strength'' of $H_I$.
To do that, it is convenient to introduce the interaction-picture operators $H_I(t) \coloneqq e^{iH_0 t} H_I e^{-iH_0 t}$ and $\varrho(t) \coloneqq e^{iH_S t} \rho(t) e^{-iH_S t}$, so that, under the standard assumption $\Tr_E[H_I(t) \rho_E] = 0$, the lowest-order expressions of Eq.~\eqref{eq:nakajima} and \eqref{eq:time-local} are given, respectively, by the \textit{Born equation}
\begin{equation}\label{eq:born}
    \frac{d}{dt} \varrho(t) = -\int_0^t ds \Tr_E[H_I(t), [H_I(s), \varrho(s) \otimes \rho_E]]
\end{equation}
and the \textit{Redfield equation}~\cite{Bloch1957,Redfield1965}
\begin{equation} \label{eq:redfield}
    \frac{d}{dt} \varrho(t) = -\int_0^t ds \Tr_E[H_I(t), [H_I(s), \varrho(t) \otimes \rho_E]].
\end{equation}
Notice how the only difference is that in the Redfield equation $\varrho(t)$ appears in the right-hand side instead of $\varrho(s)$.
The term ``Redfield equation'' is also used in the literature to indicate the variant of Eq.~\eqref{eq:redfield} in which $t$ is replaced by $\infty$ in the upper integration limit.
Here we use the terms \textit{time-dependent} Redfield equation to indicate Eq.~\eqref{eq:redfield} and \textit{time-independent} Redfield equation to indicate its variant with $\infty$ as the upper integration limit.

It is easy to verify that the dynamical maps $\Phi_t$ corresponding to Eqs.~\eqref{eq:born} and \eqref{eq:redfield} are Hermitian-preserving and trace-preserving.
However, it can be shown that they are not completely positive~\cite{Breuer2002,Dumcke1979}.

In the following it will be useful to remind the standard Kossakowski form of the Redfield equation~\cite{DAbbruzzo2023}.
Let $\{\ket{k}\}$ be the orthonormal basis of normalized eigenvectors of $H_S$, so that $H_S \ket{k} = \omega_k \ket{k}$, and call $E_{kq} \coloneqq \dyad{k}{q}$.
Moreover, suppose the interaction is of the form $H_I = \sum_\alpha L_{\alpha} \otimes B_{\alpha}$.
It is convenient to express the result in Schr\"odinger picture, where exponential factors of the form $e^{\pm i \omega_k t}$ disappear~\cite{Breuer2002}.
Then,
\begin{multline} \label{eq:redfield-standard}
    \frac{d}{dt} \rho(t) = -i[H_S + H_{LS}(t), \rho(t)] \\
    + \sum_{k,q,n,m} \chi_{kq,nm}(t) \qty[E_{kq} \rho(t) E_{nm}^\dagger - \frac{1}{2} \{E_{nm}^\dagger E_{kq}, \rho(t)\}],
\end{multline}
where $H_{LS}(t) = \sum_{k,q,n,m} \eta_{kq,nm}(t) E_{nm}^\dagger E_{kq}$ and
\begin{gather}
    \chi_{kq,nm}(t) = \sum_{\alpha,\beta} \qty[F_{\alpha\beta}(\omega_{kq},t) + F^*_{\beta\alpha}(\omega_{nm},t)] L_{\beta,kq} L^*_{\alpha,nm}, \\
    \eta_{kq,nm}(t) = \sum_{\alpha,\beta} \frac{F_{\alpha\beta}(\omega_{kq},t) - F^*_{\beta\alpha}(\omega_{nm},t)}{2i} L_{\beta,kq} L^*_{\alpha,nm}.
\end{gather}
Here $L_{\alpha,kq} \coloneqq \mel{k}{L_\alpha}{q}$ is the matrix element of the system operator $L_\alpha$, $\omega_{kq} \coloneqq \omega_q - \omega_k$ is a Bohr transition frequency, and
\begin{equation} \label{eq:F}
    F_{\alpha\beta}(\omega, t) = \int_0^t d\tau \, c_{\alpha\beta}(\tau) e^{i\omega\tau}
\end{equation}
with $c_{\alpha\beta}(\tau) \coloneqq \Tr[e^{iH_E \tau} B_\alpha^\dagger e^{-iH_E \tau} B_\beta \rho_E]$ environment correlation function.
The matrix $\chi(t)$ is sometimes called Kossakowski matrix associated with Eq.~\eqref{eq:redfield-standard}, and it characterizes the non-unitary part of the evolution.

In the time-independent case, the well-known Lindblad theorem~\cite{Lindblad1976,Gorini1976} tells us that positivity of $\chi$ is equivalent to the complete positivity of the dynamics.
In the more difficult time-dependent scenario, the positivity condition $\chi(t) \geq 0$ is only sufficient to guarantee complete positivity of Eq.~\eqref{eq:redfield-standard} \textit{and} CP-divisibility.
Based on this observation, the following approach to regularize the Redfield equation was proposed in Ref.~\cite{DAbbruzzo2023}: substitute $\chi(t)$ with its closest positive semidefinite matrix of the same dimension at every time point $t$.

In App.~\ref{app:vectorization} we show how to transform Eq.~\eqref{eq:redfield-standard} into a vector differential equation, which can then be solved with standard routines.
The transformation is carried out by keeping at every stage explicit access to the Kossakowski matrix $\chi(t)$, in order to easily allow the simultaneous study of the Markovian regularization in Ref.~\cite{DAbbruzzo2023}.
To our knowledge, this is not immediately possible with existing Redfield solvers, as the one contained in the Python library QuTiP~\cite{Johansson2013}.


\section{Choi-proximity regularization} \label{sec:regularization}

\subsection{Projection onto the Choi space} \label{subsec:projection}

Let $\Phi \in \mathrm{T}(\mathcal{H})$ be a Hermitian-preserving map, with associated Choi operator $P \coloneqq J(\Phi)$.
By Theorem~\ref{thm:characterization_choi} we know that $P$ is Hermitian, but in general it is not a physical Choi operator.
However, we can ask what is the physical Choi operator $\widetilde{P}$ that is ``closest'' to $P$ in the following sense:
\begin{equation}\label{eq:projection-choi}
    \widetilde{P} \coloneqq \operatorname*{arg\,min}_{X \in \mathfrak{J}(\mathcal{H})} \norm{P-X},
\end{equation}
where $\norm{X}^2 \coloneqq \langle X,X \rangle$ is the Frobenius norm.
Since $\mathfrak{J}(\mathcal{H})$ is a nonempty closed and convex set, a standard result in convex analysis guarantees that the ``projection'' $P \mapsto \widetilde{P}$ is well defined~\cite{Urruty1993}.
We argue that the map $\widetilde{\Phi}$ obtained from $\widetilde{P}$ using Eq.~\eqref{eq:inverse_choi} could work as a minimally-invasive CPT substitute for the original map $\Phi$.
Obviously, if $\Phi$ is already a quantum channel then $\widetilde{P} = P$ and $\widetilde{\Phi} = \Phi$, hence no modification is introduced.
The choice of the Frobenius norm over any other norm is mostly of mathematical convenience.
However, it can be given physical significance in terms of maximization of the likelihood function in measurement experiments~\cite{Smolin2012}.

In general, it is not possible to obtain a closed form expression for~\eqref{eq:projection-choi}, and one has to rely on numerical approaches.
In particular, \eqref{eq:projection-choi} belongs to the family of \textit{semidefinite least-squares problems}, which form a particularly interesting subclass of nonlinear convex programming with a variety of applications in numerical linear algebra and statistics~\cite{Boyd2004,Malick2004,Henrion2011}.
The interested reader can find in App.~\ref{app:algorithm} a brief survey of the most popular approaches to solve~\eqref{eq:projection-choi}, together with a more detailed description of the method chosen by us here to generate the results in Sec.~\ref{sec:examples}.
In the context of quantum process tomography, these methods have successfully been applied for Hilbert spaces with at most seven qubits ($d = 2^7$), obtaining results with high precision~\cite{Stepney2022}.


\subsection{Application to dynamical maps}

Suppose to have a set $\{\Phi_t\}_{t \geq 0}$ of Hermitian-preserving dynamical maps, and consider the new set $\{\widetilde{\Phi}_t\}_{t \geq 0}$ of quantum channels obtained by applying the procedure described above for every time point $t$.
The map $\widetilde{\Phi}_t$ can then be seen as a minimally-invasive regularized CPT version of $\Phi_t$.

To see how the maps $\Phi_t$ and $\widetilde{\Phi}_t$ differs, consider their difference in the Choi representation:
\begin{equation} \label{eq:cpt-violation}
    \Delta(t) \coloneqq J(\widetilde{\Phi}_t) - J(\Phi_t).
\end{equation}
The norm $\norm{\Delta(t)}$ can be taken as a measure of ``CPT violation'' of $\Phi_t$, and hence as a quantification of the influence of the regularization procedure.

We can write
\begin{align}
    \widetilde{\Phi}_t(\rho) &= d \Tr_2[\widetilde{P}(t) (\mathbbm{1}_{\mathcal{H}} \otimes \rho^T)] \nonumber \\
    &= \Phi_t(\rho) + d \Tr_2[\Delta(t) (\mathbbm{1}_{\mathcal{H}} \otimes \rho^T)].
\end{align}
Assuming $\Delta(t)$ to be differentiable
\footnote{The differentiability of $\Delta(t)$ can be assessed using the implicit function theorem on the Karush-Kuhn-Tucker conditions associated with the optimization problem~\eqref{eq:projection-choi}, as shown, e.g., in Ref.~\cite{Barratt2019}.
Note that even if $\Delta(t)$ happens to have regions of non-differentiability the regularization procedure can still be used from an exclusively numerical standpoint without making reference to formal master equations.}
we can write formal master equations for the regularized dynamics.
Specifically, if $\Phi_t$ satisfies the Nakajima-Zwanzig equation~\ref{eq:nakajima} we have
\begin{equation}
    \partial_t \widetilde{\Phi}_t(\rho) = \int_0^t ds \, \mathcal{K}_{t-s} \Phi_s(\rho) + d \Tr_2[\partial_t \Delta(t) (\mathbbm{1}_{\mathcal{H}} \otimes \rho^T)].
\end{equation}
This can be rewritten as a master equation for the regularized dynamics $\widetilde{\Phi}_t$ only as
\begin{align} \label{eq:reg-time-nonlocal}
    \partial_t \widetilde{\Phi}_t(\rho) &= \int_0^t ds \, \mathcal{K}_{t-s} \qty[ \widetilde{\Phi}_s(\rho) - d \Tr_2[\Delta(s) (\mathbbm{1}_{\mathcal{H}} \otimes \rho^T)] ] \nonumber \\
    &+ d \Tr_2[\partial_t \Delta(t) (\mathbbm{1}_{\mathcal{H}} \otimes \rho^T)].
\end{align}
With similar steps, we can write a similar equation in case $\Phi_t$ satisfies the time-local master equation in Eq.~\eqref{eq:time-local}:
\begin{align} \label{eq:reg-time-local}
    \partial_t \widetilde{\Phi}_t(\rho) &= \mathcal{L}_t \qty[ \widetilde{\Phi}_t(\rho) - d \Tr_2[\Delta(t) (\mathbbm{1}_{\mathcal{H}} \otimes \rho^T)] ] \nonumber \\
    &+ d \Tr_2[\partial_t \Delta(t) (\mathbbm{1}_{\mathcal{H}} \otimes \rho^T)].
\end{align}
These are the formal master equations obeyed by the regularized dynamics.
The most important fact we can get from these equations is that they contain a dependence on the initial state $\rho$ which was absent in the original evolution.
Even in case the original generator does not depend on time, $\mathcal{L}_t \equiv \mathcal{L}$, the presence of $\Delta(t)$ in the right-hand side of Eq.~\eqref{eq:reg-time-local} makes the regularized dynamics, in general, non-Markovian.
Note, however, that the ``amount of non-Markovianity'' may still be different from the original dynamics and the appearance of $\rho$ in Eqs.~\eqref{eq:reg-time-nonlocal}-\eqref{eq:reg-time-local} is taken here just as a hint that non-Markovianity can be observed, a fact that we will verify in Sec.~\ref{sec:examples} through examples.

Another important point is how the two maps $\Phi_t$ and $\widetilde{\Phi}_t$ relate to the exact solution $\Phi_t^{\mathrm{ex}}$ of the open system setting.
Since the regularization is constructed as a projection onto the physical Choi space, and since the Choi operator of the exact solution is obviously physical, the regularized dynamics $\widetilde{\Phi}_t$ is necessarily ``closer'' to $\Phi_t^{\mathrm{ex}}$ than $\Phi_t$.
This is a consequence of the convexity of $\mathfrak{J}(\mathcal{H})$, and it follows from general results in convex analysis~\cite{Urruty1993}.
Given its importance for our discussion, we reproduce here a proof.
\begin{theorem} \label{thm:distance-from-exact}
    $\norm{J(\Phi_t) - J(\Phi_t^{\mathrm{ex}})} \geq \norm*{J(\widetilde{\Phi}_t) - J(\Phi_t^{\mathrm{ex}})}$.
\end{theorem}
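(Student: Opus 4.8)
The plan is to recognize this as the classical \emph{obtuse-angle} (variational) characterization of the metric projection onto a closed convex set, here specialized to the real Hilbert space $\mathrm{Herm}(\mathcal{H}\otimes\mathcal{H})$ equipped with the Frobenius inner product. I would write $P \coloneqq J(\Phi_t)$, let $\widetilde{P} \coloneqq J(\widetilde{\Phi}_t)$ be its projection onto $\mathfrak{J}(\mathcal{H})$ as defined in Eq.~\eqref{eq:projection-choi}, and set $Q \coloneqq J(\Phi_t^{\mathrm{ex}})$. Two preliminary facts are needed: first, $Q \in \mathfrak{J}(\mathcal{H})$ because $\Phi_t^{\mathrm{ex}}$ is a genuine quantum channel (Theorem~\ref{thm:characterization_choi}); second, $P$, $\widetilde{P}$, $Q$ are all Hermitian, so every Frobenius inner product among them is real, which is what will let me expand squared norms without carrying a real-part.

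The first substantive step is to establish the variational inequality: for every $X \in \mathfrak{J}(\mathcal{H})$,
\begin{equation}
    \langle P - \widetilde{P},\, X - \widetilde{P} \rangle \leq 0 .
\end{equation}
I would derive this directly from the convexity of $\mathfrak{J}(\mathcal{H})$: for $\lambda \in [0,1]$ the point $\widetilde{P} + \lambda(X - \widetilde{P})$ still lies in $\mathfrak{J}(\mathcal{H})$, so $g(\lambda) \coloneqq \norm{P - \widetilde{P} - \lambda(X-\widetilde{P})}^2$, a quadratic in $\lambda$, is minimized over $[0,1]$ at $\lambda = 0$; the condition $g'(0)\geq 0$ is exactly the displayed inequality. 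Alternatively one may simply invoke the standard result from convex analysis~\cite{Urruty1993}.

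The second step is the expansion
\begin{equation}
    \norm{P - Q}^2 = \norm{P - \widetilde{P}}^2 + 2\,\langle P - \widetilde{P},\, \widetilde{P} - Q \rangle + \norm{\widetilde{P} - Q}^2 ,
\end{equation}
obtained by writing $P - Q = (P-\widetilde{P}) + (\widetilde{P}-Q)$ and using bilinearity of the (real) inner product. Since $Q \in \mathfrak{J}(\mathcal{H})$, applying the variational inequality with $X = Q$ gives $\langle P - \widetilde{P},\, \widetilde{P} - Q\rangle \geq 0$, while $\norm{P - \widetilde{P}}^2 \geq 0$ trivially; hence $\norm{P-Q}^2 \geq \norm{\widetilde{P}-Q}^2$, and taking square roots of these nonnegative quantities yields the claim. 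I do not expect a genuine obstacle: the only point demanding care is the justification of the variational inequality (equivalently, that the nearest point in a closed convex set is characterized by this obtuse-angle condition), and — more pedantically — recording that the Hermiticity of the Choi operators is what makes the cross term a bona fide real number rather than merely its real part.
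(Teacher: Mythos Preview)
Your proof is correct and follows essentially the same route as the paper: both derive the obtuse-angle variational inequality $\langle P-\widetilde{P},\,\widetilde{P}-Q\rangle \geq 0$ by moving along the segment from $\widetilde{P}$ to $Q$ inside $\mathfrak{J}(\mathcal{H})$ and differentiating the squared distance at the endpoint, then insert it into the expansion of $\norm{P-Q}^2$. Your presentation is slightly cleaner in that you name the general principle and explicitly flag that Hermiticity is what makes the cross term real, a point the paper uses silently.
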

\begin{proof}
    First note that since $J(\widetilde{\Phi}_t)$ and $J(\Phi_t^{\mathrm{ex}})$ are in the physical Choi space, also $\alpha J(\Phi_t^{\mathrm{ex}}) + (1-\alpha) J(\widetilde{\Phi}_t)$ is in the physical Choi space for any $\alpha \in [0,1]$.
    Such an operator must be necessarily more distant to $J(\Phi_t)$ than $J(\widetilde{\Phi}_t)$, since $J(\widetilde{\Phi}_t)$ is defined as a projection.
    Hence
    \begin{widetext}
        \begin{multline}
            \frac{1}{2} \norm*{J(\Phi_t) - J(\widetilde{\Phi}_t)}^2
            \leq \frac{1}{2} \norm*{J(\Phi_t) - [\alpha J(\Phi_t^{\mathrm{ex}}) + (1-\alpha)J(\widetilde{\Phi}_t)]}^2
            = \frac{1}{2} \norm*{[J(\Phi_t) - J(\widetilde{\Phi}_t)] + \alpha [J(\widetilde{\Phi}_t) - J(\Phi_t^{\mathrm{ex}})]}^2 \\
            = \frac{1}{2} \norm*{J(\Phi_t) - J(\widetilde{\Phi}_t)}^2 + \frac{\alpha^2}{2} \norm*{J(\widetilde{\Phi}_t) - J(\Phi_t^{\mathrm{ex}})}^2
            + \alpha \langle J(\Phi_t) - J(\widetilde{\Phi}_t), J(\widetilde{\Phi}_t) - J(\Phi_t^{\mathrm{ex}}) \rangle.
        \end{multline}
        This inequality can be rewritten as
        \begin{equation}
            \langle J(\Phi_t) - J(\widetilde{\Phi}_t), J(\widetilde{\Phi}_t) - J(\Phi_t^{\mathrm{ex}}) \rangle + \frac{\alpha}{2} \norm*{J(\widetilde{\Phi}_t) - J(\Phi_t^{\mathrm{ex}})}^2 \geq 0.
        \end{equation}
        Sending $\alpha \to 0$ we conclude that
        \begin{equation}
            \langle J(\Phi_t) - J(\widetilde{\Phi}_t), J(\widetilde{\Phi}_t) - J(\Phi_t^{\mathrm{ex}}) \rangle \geq 0.
        \end{equation}
        This can be used to write
        \begin{multline}
            \norm{J(\Phi_t) - J(\Phi_t^{\mathrm{ex}})}^2 = \norm*{J(\Phi_t) - J(\widetilde{\Phi}_t) + J(\widetilde{\Phi}_t) - J(\Phi_t^{\mathrm{ex}})}^2 \\
            = \norm*{J(\Phi_t) - J(\widetilde{\Phi}_t)}^2 + \norm*{J(\widetilde{\Phi}_t) - J(\Phi_t^{\mathrm{ex}})}^2
            + 2 \langle J(\Phi_t) - J(\widetilde{\Phi}_t), J(\widetilde{\Phi}_t) - J(\Phi_t^{\mathrm{ex}}) \rangle
            \geq \norm*{J(\widetilde{\Phi}_t) - J(\Phi_t^{\mathrm{ex}})}^2.
        \end{multline}
    \end{widetext}
    The statement of the theorem is obtained by taking the square root on both sides on this inequality.
\end{proof}

From a practical point of view, the construction of the regularized dynamics $\{\widetilde{\Phi}_t\}_{t \geq 0}$ from a given dynamical map $\{\Phi_t\}_{t \geq 0}$ can now be summarized as follows.
At every time step, the Choi operator $J(\Phi_t)$ is obtained through evolution of the basis $\{E_{nm}\}$.
After diagonalization, one can check if $J(\Phi_t)$ is physical or not, and in the latter case the projection~\eqref{eq:projection-choi} is performed.
The result is $J(\widetilde{\Phi}_t)$ at all time steps.
At this point the regularization is concluded: the evolution of an arbitrary initial state $\rho$ can be obtained from $J(\widetilde{\Phi}_t)$ using Eq.~\eqref{eq:inverse_choi}.
Note that this last step is very cheap compared to the regularization task and can be performed repeatedly for any desired initial state.


\section{Examples} \label{sec:examples}

In order to show the advantageous effects implied by the Choi-proximity regularization, we test the method in two simple scenarios: a qubit amplitude damping model and a spin-boson model.
For these systems exact solutions are available and can be used as benchmarks.

\subsection{Qubit amplitude damping}\label{subsec:qubit}

Consider the case in which the system $S$ is a qubit with energy levels $\ket{0}$, $\ket{1}$, and Hamiltonian $H_S = \omega \dyad{1}$, where $\omega > 0$.
The environment is a bath of harmonic oscillators, so that $H_B = \sum_k \epsilon_k b_k^\dagger b_k$, where $b_k, b_k^\dagger$ are bosonic creation and annihilation operators.
The interaction is assumed to be a rotating-wave bilinear
\begin{equation}
    H_I = \dyad{1}{0} \otimes B + \dyad{0}{1} \otimes B^\dagger,
    \quad
    B = \sum_k g_k b_k,
\end{equation}
where $g_k$ are some constants.
It is easy to see that the environment is described by a single correlation function, which we assume for simplicity to be exponentially decaying in time:
\begin{equation}\label{eq:corr_function}
    c(t) = \frac{\gamma\mu}{2} e^{-\mu |t|} e^{-i\omega t}.
\end{equation}
In frequency space, this is equivalent to a Lorentzian spectral density in resonance with the qubit energy $\omega$.
The inverse of the fastest evolution time of the system can be estimated as~\cite{Mozgunov2020}
\begin{equation}
    \frac{1}{\tau_{SE}} \coloneqq \int_0^\infty |c(t)|dt = \frac{\gamma}{2},
\end{equation}
while the environment correlation time is
\begin{equation}
    \tau_E \coloneqq \frac{\int_0^\infty t|c(t)|dt}{\int_0^\infty |c(t)|dt} = \frac{1}{\mu}.
\end{equation}
One can then prove~\cite{Breuer2002} that the accuracy of the perturbative master equations above is controlled by the dimensionless ratio
\begin{equation}
    r \coloneqq \frac{\tau_E}{\tau_{SE}} = \frac{\gamma}{2\mu},
\end{equation}
which is required to be sufficiently smaller than one for the Born/Redfield equations to be accurate.

\begin{figure*}
    \centering
    \includegraphics[width=\linewidth]{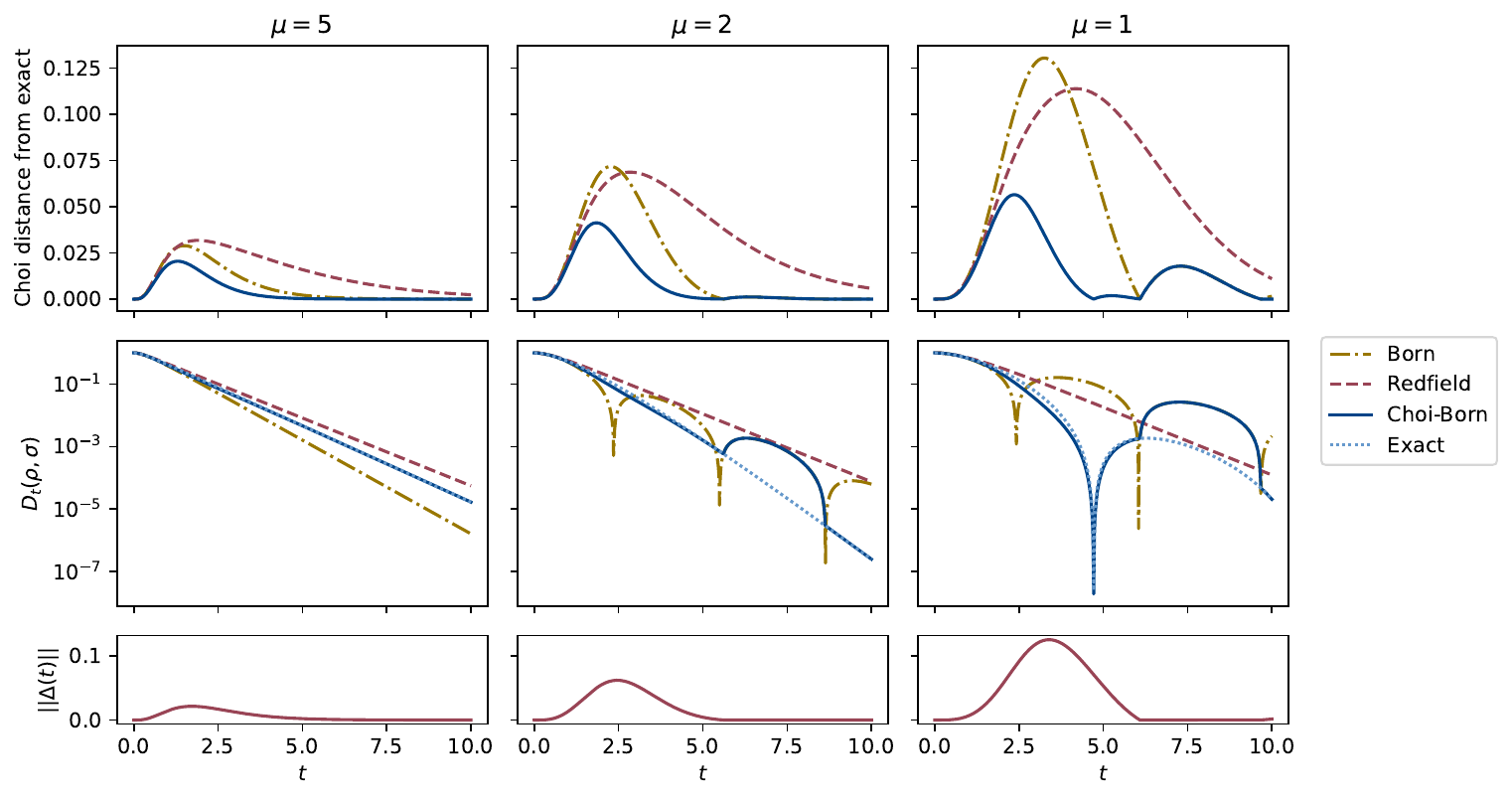}
    \caption{
        Plots obtained for the qubit system in Sec.~\ref{subsec:qubit} with $\gamma = 1$ and $\mu \in \{5, 2, 1\}$.
        Top row: dynamics of the Choi distance between a certain evolution and the exact one [cf.~Eq.~\eqref{eq:choi-distance}].
        Middle row: dynamics of the distinguishability measure $D_t(\rho, \sigma)$ in Eq.~\eqref{eq:distinguishability} with $\rho = \dyad{0}$ and $\sigma = \dyad{1}$.
        In this case the $y$-axis is plotted in logarithmic scale for better readability.
        Bottom row: CPT violation parameter $\norm{\Delta(t)}$, with $\Delta(t)$ defined in Eq.~\eqref{eq:cpt-violation}.
        Note how the Choi-proximity regularization of the Born master equation (``Choi-Born'' in the legend) outperforms the other methods and, when is activated, reproduces the Markovian and non-Markovian behaviors of the exact dynamics.
    }
    \label{fig:qubit}
\end{figure*}

This is probably the simplest nontrivial open quantum setting that can be solved exactly.
It is also possible to analytically solve both the Born equation and the Redfield equation~\cite{Breuer2002}.
All these solutions can be cast in the following form:
\begin{subequations}\label{eq:qubit_solutions}
    \begin{align}
        \rho_{00}(t) &= \rho_{00}(0) A(t) + 1 - A(t), \\
        \rho_{01}(t) &= \rho_{01}(0) B(t) e^{i\omega t}, \\
        \rho_{11}(t) &= \rho_{11}(0) A(t),
    \end{align}
\end{subequations}
where $\rho_{ij}(t) = \mel{i}{\rho(t)}{j}$ for $i,j \in \{0,1\}$.
Now, define the following function:
\begin{equation}
    G(\alpha,t) \coloneqq e^{-\mu t/2} \qty[\cosh(\frac{\alpha t}{2}) + \frac{\mu}{\alpha} \sinh(\frac{\alpha t}{2})].
\end{equation}
Then, the exact solution is characterized by
\begin{subequations} \label{eq:qubit-exact}
    \begin{align}
        A(t) &= |G(\alpha,t)|^2,
        \quad
        B(t) = G(\alpha,t), \\
        \alpha &= \sqrt{\mu^2 - 2\gamma\mu}.
    \end{align}
\end{subequations}
Instead, for the Born equation~\eqref{eq:born} one obtains
\begin{subequations} \label{eq:qubit-born}
    \begin{align}
        A(t) &= G(\alpha',t),
        \quad
        B(t) = G(\alpha,t), \\
        \alpha' &= \sqrt{\mu^2 - 4\gamma\mu},
        \quad
        \alpha = \sqrt{\mu^2 - 2\gamma\mu}.
    \end{align}
\end{subequations}
Finally, for the time-dependent version of the Redfield equation in Eq.~\eqref{eq:redfield}, one finds
\begin{subequations} \label{eq:qubit-redfield}
    \begin{align}
        A(t) &= e^{-R(t)},
        \quad
        B(t) = e^{-R(t)/2}, \\
        R(t) &= \gamma\qty(t + \frac{e^{-\mu t} - 1}{\mu}).
    \end{align}
\end{subequations}
The time-independent variant is instead characterized by $R(t) = \gamma t$.
However, for the purpose of the present example we will focus exclusively on the more complex time-dependent scenario.

First, let us ask about the Markovianity of these solutions.
Given two arbitrary qubit states $\rho, \sigma$ it is easy to calculate that, under the evolution dictated by Eq.~\eqref{eq:qubit_solutions}, their distinguishability is
\begin{equation}
    D_t(\rho, \sigma) = \sqrt{|\rho_{11} - \sigma_{11}|^2 A^2(t) + |\rho_{01} - \sigma_{01}|^2 B^2(t)}.
\end{equation}
In case $B^2(t) = A(t)$ for all $t \geq 0$ (which happens for the exact solution and the Redfield equation), Eq.~\eqref{eq:qubit_solutions} describes a qubit amplitude damping channel, and it is known that in this case the BLP condition~\eqref{eq:BLP} provides a necessary and sufficient condition for CP-divisibility~\cite{Lonigro2022,Chruscinksi2022}.
In this case,
\begin{equation}\label{eq:trace_dist_A}
    D_t(\rho, \sigma) = |\rho_{11} - \sigma_{11}| A(t),
\end{equation}
and CP-divisibility is equivalent to $dA(t)/dt \leq 0$ for all $t \geq 0$.
Notice that
\begin{equation} \label{eq:blp_redfield}
    \frac{d}{dt} e^{-R(t)} = -\gamma e^{-R(t)} (1-e^{-\mu t}) \leq 0,
\end{equation}
therefore the Redfield solution is always Markovian.
Instead, for the exact solution we have
\begin{equation}
    \frac{d}{dt} |G(\alpha,t)|^2 = \qty(1-\frac{\mu^2}{\alpha^2}) \alpha e^{-\mu t/2} \sinh(\frac{\alpha t}{2}) G(\alpha, t).
\end{equation}
In case $\mu \geq 2\gamma$ (i.e., $r \leq 1/4$), we have that $\alpha$ is a nonnegative real number that satisfy $\mu^2/\alpha^2 \geq 1$, and $G(\alpha,t) \geq 0$.
This implies that the above derivative is nonpositive and the dynamics is Markovian.
On the other hand, if $\mu < 2\gamma$ (i.e., $r > 1/4$) then $\alpha$ is an imaginary number: the above derivative oscillates, and can take positive values.
This means that the dynamics is non-Markovian for sufficiently small spectral width.

In case $B^2(t) \neq A(t)$ (which happens for the Born equation), the BLP condition is only necessary for Markovianity.
Nevertheless, for initial states $\rho, \sigma$ with equal coherences, the distinguishability function $D_t(\rho, \sigma)$ still acquires the form in Eq.~\eqref{eq:trace_dist_A}, so that we can again focus our attention on $A(t)$ only.
Now,
\begin{equation}
    \frac{d}{dt} G(\alpha', t) = \frac{1}{2} \qty(1-\frac{\mu^2}{\alpha'^2}) e^{-\mu t/2} \sinh(\frac{\alpha' t}{2}).
\end{equation}
Again, if $\mu < 4\gamma$ (i.e., $r > 1/8$) this derivative oscillates and can acquire positive values: this indicates that for sufficiently small spectral width the Born equation is non-Markovian.

Obviously, we expect the Born and Redfield equations to be inaccurate for small $\mu$, but already at moderate couplings (with $r \sim 1/4$) the Redfield equation is completely unable to capture the true non-Markovian character of the dynamics.
The Born equation, instead, is able to show this behavior.

Now let us ask about the CPT requirement.
A straightforward computation shows that the Choi operator associated with the dynamical map described by Eq.~\eqref{eq:qubit_solutions} is given by
\begin{equation}\label{eq:qubit_choi}
    P(t) = \frac{1}{2} \begin{bmatrix}
        1 & 0 & 0 & B(t)e^{i\omega t} \\
        0 & 1-A(t) & 0 & 0 \\
        0 & 0 & 0 & 0 \\
        B(t)e^{-i\omega t} & 0 & 0 & A(t)
    \end{bmatrix}.
\end{equation}
This always satisfies $\Tr_1 P(t) = \mathbbm{1}/2$, in accordance with the fact that all the involved dynamical maps are trace-preserving.
For what concerns its positivity, a straightforward computation reveals that the eigenvalues of $P(t)$ are
\begin{equation}
    0, \,\, \frac{1-A(t)}{2}, \,\,
    \frac{1 + A(t) \pm \sqrt{[1-A(t)]^2 + 4B^2(t)}}{4}.
\end{equation}
Since $A(t) \leq 1$, these eigenvalues are always nonnegative except (at most) the rightmost one with the minus sign, which is easily seen to be nonnegative if and only if $B^2(t) \leq A(t)$.
This condition is always satisfied by the exact solution~\eqref{eq:qubit-exact} and the Redfield equation~\eqref{eq:qubit-redfield}, but not by the Born equation~\eqref{eq:qubit-born}, which can then violate the complete positivity requirement.

We can now summarize the situation as follows.
The Redfield equation is CPT but it doesn't capture the non-Markovian features at moderate couplings.
On the other hand, the Born equation exhibits non-Markovian effects but in general it is not CPT.
It is then reasonable to apply the regularization procedure described in Sec.~\ref{sec:regularization} to the Born equation.

In order to compare the performances of the various master equations, we need an accuracy measure (with respect to the exact solution) that is independent from the initial state of the dynamics.
Here we choose the quantity considered in Ref.~\cite{DAbbruzzo2023}, referred to as ``Choi distance'' between an approximated dynamical map $\Phi_t$ and the exact one $\Phi_t^{\mathrm{ex}}$:
\begin{equation} \label{eq:choi-distance}
    \norm*{J(\Phi_t) - J(\Phi_t^{\mathrm{ex}})},
\end{equation}
where the Choi operators are calculated using Eq.~\eqref{eq:qubit_choi}.
If $\Phi_t$ is not CPT, by Theorem~\ref{thm:distance-from-exact} we know that this distance can always be lowered by taking the Choi-proximity regularization of $\Phi_t$.

In Fig.~\ref{fig:qubit} (top row) we plot these quantities for the Born equation, the Redfield equation, and the regularized version of the Born equation.
Here we fixed $\gamma = 1$ and we studied what happens for several values of $\mu$.
Obviously, a greater $\mu$ (and hence a smaller $r$) leads to better accuracy for all master equations, while things get worse when we approach the non-Markovian regime for smaller values of $\mu$.
However, notice that in all cases the regularized Born equation outperforms the other two: for the Born equation this is expected, but it was not obvious for the Redfield case.
Notice also that for small enough $\mu$ the Born equation develops regions in which it is CPT: unfortunately, our regularization cannot do anything there.

In Fig.~\ref{fig:qubit} (bottom row) we also show the dynamics of the distinguishability measure $D_t(\rho, \sigma)$ in Eq.~\eqref{eq:distinguishability} when $\rho = \dyad{0}$ and $\sigma = \dyad{1}$.
As before, $\gamma = 1$ and several values of $\mu$ are investigated.
In the Markovian regime (great values of $\mu$) we see how the prediction of the regularized Born equation is practically exact, while at this level some deviations can be seen for the other two master equations.
When entering the non-Markovian regime with smaller values of $\mu$, several things can be noticed.
The prediction of the Redfield equation is always monotonic, as expected from Eq.~\eqref{eq:blp_redfield}.
The Born equation, instead, is capable of showing non-monotonic behavior but it is quite different from the exact one, as a consequence of the violation of $r \ll 1$.
The regularized Born equation not only shows non-monotonic behavior as well, but also remarkably follows the exact solution in those time regions where the Born equation is not CPT.
Again, the regularization has no power in those regions where the Born equation is already CPT.
Finally, notice that there is an intermediate situation ($\mu = 2$) where the Born equation predicts non-Markovianity when the exact solution does not: when activated, our regularization is capable of correcting this behavior.


\subsection{Spin-boson model}\label{subsec:spinboson}

As a second example we study the more complex spin-boson model, which occupies a prominent role in the theory of open quantum systems and constitutes a paradigmatic playground to study many dissipative and decoherence mechanisms~\cite{Leggett1987,Breuer2002}.
The system consists of a spin-1/2 particle with Hamiltonian
\begin{equation}
    H_S = \frac{\varepsilon}{2} \sigma_z + \frac{\delta}{2} \sigma_x,
\end{equation}
where $\sigma_x, \sigma_y, \sigma_z$ are the usual Pauli matrices, $\varepsilon$ is the energy gap between ground state $\ket{0}$ and excited state $\ket{1}$, and $\delta$ is a tunneling strength between such states.
The spin is coupled to an environment of bosonic oscillators $H_E = \sum_k \epsilon_k b_k^\dagger b_k$, as in Sec.~\ref{subsec:qubit}, and the interaction is
\begin{equation}
    H_I = \sigma_z \otimes \sum_k g_k \qty(b_k + b_k^\dagger).
\end{equation}
In order to keep things simple, we assume again that the environment is characterized by an exponential correlation function [cf.~Eq.~\eqref{eq:corr_function}]
\begin{equation}
    c(t) = \frac{\gamma\mu}{2} e^{-\mu|t|} e^{-i\omega_0 t},
\end{equation}
where $\mu$ is the spectral width, $\gamma$ is the coupling constant, and $\omega_0$ is the resonance frequency.

There is no explicit exact solution for this model, even though the system is small enough to allow for an easy numerical approach through hierarchical equations of motion (HEOM)~\cite{Tanimura2020}.
In particular, we rely on the corresponding functions provided by the Python library QuTip~\cite{Johansson2013}: in the following, the result obtained with this procedure will be referred to as ``exact solution''.
The Redfield equation associated with this problem (and its regularizations) can also easily be solved using the vectorization approach illustrated in App.~\ref{app:vectorization}.

It is generally believed that CPT violations of the Redfield equation tends to appear at short times and with at least moderate coupling constant~\cite{Hartmann2020} (even though some positivity-violation effects have been detected at long times too~\cite{Benatti2022}, and the onset of positivity violations is not a mathematical guarantee of complete breakdown of the approximations used to derive the master equation).
In this regime the accuracy with respect to the exact solution is not expected to be satisfying.
Applying the Choi-proximity regularization can slightly improve performances, but our aim in this example is another one: we want to show how our procedure can give a well-defined dynamics that retains the non-Markovian features of the Redfield equation, contrary to other common regularization schemes.
For completeness, in App.~\ref{app:inaccuracy} we show that in the regime studied below (where positivity violation occurs) the Redfield equation is indeed inaccurate for the spin-boson model, in accordance with what is discussed in Ref.~\cite{Hartmann2020}.

\begin{figure}
    \centering
    \includegraphics[width=\columnwidth]{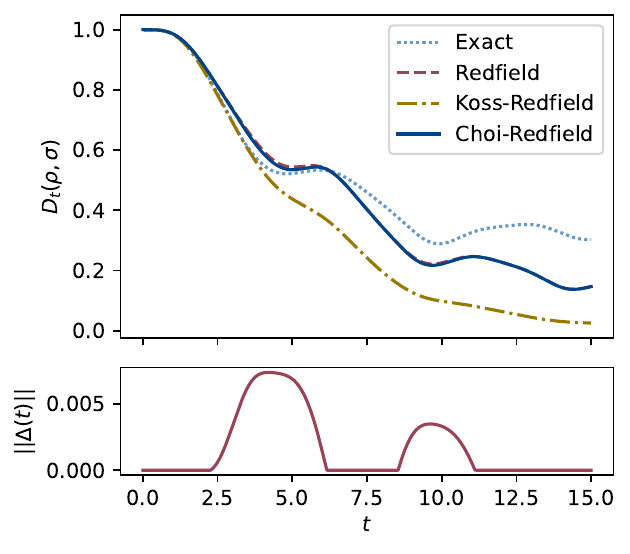}
    \caption{
        Dynamics of the distinguishability measure $D_t(\rho, \sigma)$ of the spin-boson model, with $\rho = \dyad{0}$ and $\sigma = \dyad{1}$.
        At the bottom, the CPT violation parameter $\norm{\Delta(t)}$ [cf.~Eq.~\eqref{eq:cpt-violation}].
        The parameters of the model are $\varepsilon = 1, \delta = 0.7$ for the system and $\gamma = 1.5, \mu = 0.1, \omega_0 = 1$ for the interaction.
    }
    \label{fig:spin_boson}
\end{figure}
\begin{figure}
    \centering
    \includegraphics[width=\columnwidth]{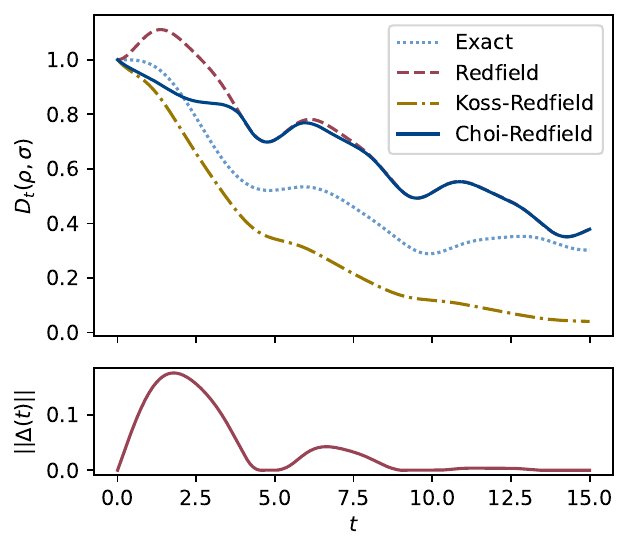}
    \caption{
        Same as Fig.~\ref{fig:spin_boson} but using a Redfield equation with time-independent generator (see discussion below Eq.~\eqref{eq:redfield}).
    }
    \label{fig:spin_boson_ind}
\end{figure}
Let us first focus on the time-dependent Redfield equation.
In Fig.~\ref{fig:spin_boson} we show the dynamics of the distinguishability measure $D_t(\rho, \sigma)$ when $\rho = \dyad{0}$ and $\sigma = \dyad{1}$ in a moderate coupling regime with $\gamma = 1.5$ and $\mu = 0.1$, together with the CPT violation parameter $\norm{\Delta(t)}$.
In the distinguishability plot we see that the regularization does not make much of a difference with respect to the original time-dependent Redfield evolution, but the dynamics is now guaranteed to be physically well defined.
Notice how the time-dependent Redfield equation violates positivity at short times with $\norm{\Delta(t)} \sim 10^{-3}$.
The moderate impact of the Choi-proximity regularization allows the regularized equation to keep the nice features of the Redfield equation, such as accounting for the non-Markovianity of the model: note how the oscillations in the distinguishability measure are roughly positioned as in the exact solution.
In contrast, we also plot the prediction of the time-dependent Redfield equation regularized by acting on its Kossakowski matrix~\cite{DAbbruzzo2023}, as described at the end of Sec.~\ref{subsec:master-equations}: in this case a monotonic decrease is observed, in accordance with the fact that we are imposing a CP-divisibility requirement.

In Fig.~\ref{fig:spin_boson_ind} we show the same plot but starting from the time-independent version of the Redfield equation.
This time the positivity violation is clearly visible also in the distinguishability plot, where $D_t(\rho,\sigma)$ unphysically grows beyond the value of one at short times.
Correspondingly, we find $\norm{\Delta(t)} \sim 10^{-1}$ in this time region.
The Choi-proximity regularization corrects this behavior and reproduces non-Markovian oscillations, contrary to the Kossakowski-regularized equation.


\section{Conclusions} \label{sec:conclusions}

In this work we proposed a numerical approach that takes any Hermitian-preserving quantum evolution and transforms it into the closest trace-preserving and completely positive one, based on a projection operation in the Choi space.
By construction, it is sufficient to perform this regularization only once to be able to evolve an arbitrary initial state.
Moreover, given the strong link with quantum process tomography tasks, any future improvement in computational complexity in that field is likely to automatically provide an improvement for our regularization too.
We put forward evidence that this procedure improves the accuracy of the original map, avoids unphysical predictions, and retains eventual non-Markovian features.
We also provided the formal expressions of paradigmatic master equations that result from the Choi-proximity regularization procedure.

Here for simplicity we only showed examples using Redfield and Born equations, but this method can also be used for higher-order master equations, and in general whenever there is reason to believe that complete positivity or trace preservation could be violated in some time region of the evolution at hand.
The method could also be applied outside the standard open-quantum-systems setting, for example in the description of driven quantum systems or disorder-averaged quantum evolutions, where Redfield-like equations are known to occur: see, e.g., Refs.~\cite{Gneiting2020,Groszkowski2023} and references therein.

At this level, the procedure is intended to be mainly practical and it is expected to be useful mainly to avoid false predictions of positivity violation.
No clear physical insight was provided about the regularized dynamics, and it is still an open problem to understand what are the implications of the Choi-proximity regularization, e.g., on the thermodynamics of the system and the steady-state manifold structure.
Our intuition is that CPT restoration puts effectively back some non-Markovian effects that are neglected during the derivation of the master equations under analysis, an interpretation sustained by the apparence of an explicit dependence on the initial state in the modified master equations.


\begin{acknowledgments}
    The authors acknowledge useful discussions with Vasco Cavina.
    V.G. and A.D. acknowledge financial support by MIUR (Ministero dell'Istruzione, dell'Università e della Ricerca) via PRIN 2017 ``Taming complexity via QUantum Strategies: a Hybrid Integrated Photonic approach'' (QUSHIP), Id. 2017SRN-BRK, via project PRO3 ``Quantum Pathfinder'', via the PNRR MUR project PE0000023-NQSTI, and via PRIN 2022 ``Recovering Information in Sloppy QUantum modEls'' (RISQUE), Id. 2022T25TR3.
    D.F. acknowledges support from the Government of Spain (Severo Ochoa CEX2019-000910-S and TRANQI), Fundació Cellex, Fundació Mir-Puig, Generalitat de Catalunya (CERCA program) and ERC AdG CERQUTE.
    D.F. acknowledges financial support from PNRR MUR Project No. PE0000023-NQSTI.
\end{acknowledgments}


\appendix

\section{Algorithm for Choi-space projection} \label{app:algorithm}

In this appendix we discuss how to algorithmically solve the semidefinite least-squares problem~\eqref{eq:projection-choi} of projecting a Hermitian operator in $\mathrm{Herm}(\mathcal{H} \otimes \mathcal{H})$ onto the physical Choi space $\mathfrak{J}(\mathcal{H})$ (see Sec.~\ref{subsec:projection}).

A natural approach to tackle these kinds of problems is casting them as standard semidefinite programs, for which general purpose algorithm exist~\cite{Nocedal2006}.
However, if one recognizes that $\mathfrak{J}(\mathcal{H})$ is the intersection of two convex sets, tailored approaches can be devised~\cite{Henrion2011}.
In the context of quantum process tomography, this idea was explored and improved in Ref.~\cite{Stepney2022} for the specific problem in Eq.~\eqref{eq:projection-choi}, where a so-called hyperplane intersection projection algorithm was devised.

Here we arbitrarily choose to use an alternative approach based on Lagrangian duality~\cite{Urruty1993}.
Specifically, we make use of the following result, which is an adaptation of what was originally proved in Ref.~\cite{Malick2004} in the context of generic semidefinite least-squares problems.
\begin{theorem}\label{thm:dual}
    Problem~\eqref{eq:projection-choi} is solved by
    \begin{equation}
        \widetilde{P} = \Pi(P + \mathbbm{1}_\mathcal{H} \otimes \overline{Y}),
    \end{equation}
    where
    \begin{equation}
        \Pi: A \to \frac{A + \sqrt{A^\dagger A}}{2}
    \end{equation}
    is the projector onto $\mathrm{Pos}(\mathcal{H} \otimes \mathcal{H})$ and $\overline{Y}$ is the solution of the following unconstrained optimization problem:
    \begin{equation}\label{eq:Y_problem}
        \overline{Y} = \operatorname*{arg\,min}_{Y \in \mathrm{Herm}(\mathcal{H})} \theta(Y),
    \end{equation}
    where
    \begin{equation}
        \theta(Y) = \frac{1}{2} \norm{\Pi(P + \mathbbm{1}_\mathcal{H} \otimes Y)}^2 - \Tr Y
    \end{equation}
    is a convex differentiable function whose gradient with respect to $Y$ is the following Lipschitz-continuous function:
    \begin{equation}
        \nabla \theta(Y) = \Tr_1[\Pi(P + \mathbbm{1}_\mathcal{H} \otimes Y)] - \frac{\mathbbm{1}_\mathcal{H}}{d}.
    \end{equation}
\end{theorem}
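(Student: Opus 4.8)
The plan is to read~\eqref{eq:projection-choi} as a convex quadratic program over the intersection of the positive semidefinite cone with an affine subspace, and to dualize \emph{only} the affine (trace) constraint while leaving positivity in the primal, which is the mechanism behind the result of Ref.~\cite{Malick2004}. Squaring the objective, the problem is $\min_{X\in\mathrm{Pos}(\mathcal{H}\otimes\mathcal{H})} \tfrac12\norm{P-X}^2$ subject to $\Tr_1 X = \mathbbm{1}_\mathcal{H}/d$. First I would introduce a Hermitian multiplier $Y\in\mathrm{Herm}(\mathcal{H})$ and form the partial Lagrangian $\mathcal{L}(X,Y)\coloneqq\tfrac12\norm{P-X}^2 - \langle Y,\Tr_1 X - \mathbbm{1}_\mathcal{H}/d\rangle$, to be minimized over $X\in\mathrm{Pos}(\mathcal{H}\otimes\mathcal{H})$ only. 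Using that $\mathbbm{1}_\mathcal{H}\otimes(\cdot)$ and $\Tr_1$ are mutually adjoint for the Frobenius product, i.e.\ $\langle Y,\Tr_1 X\rangle = \langle\mathbbm{1}_\mathcal{H}\otimes Y, X\rangle$, one completes the square in $X$ to obtain $\mathcal{L}(X,Y) = \tfrac12\norm{X-(P+\mathbbm{1}_\mathcal{H}\otimes Y)}^2 - \tfrac12\norm{P+\mathbbm{1}_\mathcal{H}\otimes Y}^2 + \tfrac12\norm{P}^2 + \tfrac{1}{d}\Tr Y$.

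Next I would evaluate the dual function $g(Y)\coloneqq\min_{X\in\mathrm{Pos}(\mathcal{H}\otimes\mathcal{H})}\mathcal{L}(X,Y)$. The inner minimization is the Euclidean projection of the Hermitian operator $P+\mathbbm{1}_\mathcal{H}\otimes Y$ onto $\mathrm{Pos}(\mathcal{H}\otimes\mathcal{H})$, whose unique minimizer is $X^\star(Y) = \Pi(P+\mathbbm{1}_\mathcal{H}\otimes Y)$ with $\Pi$ as in the statement --- checked by spectral decomposition, $\Pi$ retaining the nonnegative part of the spectrum. From the orthogonal (Moreau) splitting of a Hermitian $A$ into its positive and negative parts one has $\langle\Pi(A),A-\Pi(A)\rangle=0$, hence $\norm{A-\Pi(A)}^2 = \norm{A}^2-\norm{\Pi(A)}^2$; substituting this back makes the $\norm{P+\mathbbm{1}_\mathcal{H}\otimes Y}^2$ terms cancel and leaves $g(Y) = \tfrac12\norm{P}^2 - \theta(Y)$ with $\theta$ as displayed, the coefficient of $\Tr Y$ being inherited (up to the normalization of the multiplier) from the right-hand side $\mathbbm{1}_\mathcal{H}/d$ of the trace constraint, consistently with the stated gradient. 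Concavity of $g$, hence convexity of $\theta$, is immediate because $\mathcal{L}(X,\cdot)$ is affine, so $g$ is an infimum of affine functions; differentiability of $\theta$ with $\nabla\theta(Y) = \Tr_1[\Pi(P+\mathbbm{1}_\mathcal{H}\otimes Y)] - \mathbbm{1}_\mathcal{H}/d$ follows from the chain rule and the standard identity $\nabla_A\tfrac12\norm{\Pi(A)}^2 = \Pi(A)$; and Lipschitz continuity of $\nabla\theta$ holds since $\Pi$ is nonexpansive and $\Tr_1$, $\mathbbm{1}_\mathcal{H}\otimes(\cdot)$ are bounded linear maps.

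Finally I would close the argument with strong duality. Slater's condition holds because $\mathbbm{1}_{\mathcal{H}\otimes\mathcal{H}}/d^2$ lies in the interior of $\mathrm{Pos}(\mathcal{H}\otimes\mathcal{H})$ and satisfies $\Tr_1(\mathbbm{1}_{\mathcal{H}\otimes\mathcal{H}}/d^2)=\mathbbm{1}_\mathcal{H}/d$, so there is no duality gap and a dual maximizer exists~\cite{Boyd2004}; existence of $\overline{Y} = \operatorname*{arg\,min}_Y\theta(Y)$ also follows directly from coercivity of $\theta$ (the quadratic term dominates as $Y\to+\infty$, the $-\Tr Y$ term as $Y\to-\infty$). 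The stationarity condition $\nabla\theta(\overline{Y})=0$ reads precisely $\Tr_1[\Pi(P+\mathbbm{1}_\mathcal{H}\otimes\overline{Y})]=\mathbbm{1}_\mathcal{H}/d$, so $X^\star(\overline{Y})=\Pi(P+\mathbbm{1}_\mathcal{H}\otimes\overline{Y})$ is primal feasible; it lies in $\mathrm{Pos}(\mathcal{H}\otimes\mathcal{H})$ by construction and attains the dual bound $g(\overline{Y})$, hence it is primal optimal, and by uniqueness of the projection onto the closed convex set $\mathfrak{J}(\mathcal{H})$ (see Sec.~\ref{subsec:projection} and Ref.~\cite{Urruty1993}) it must coincide with $\widetilde{P}$, which is the claim.

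I expect the crux to be the strong-duality/primal-recovery step: one has to guarantee that dualizing only the affine constraint incurs no gap and that $\Pi(P+\mathbbm{1}_\mathcal{H}\otimes\overline{Y})$ is simultaneously feasible and optimal --- this is where Slater's condition and the complementary-slackness bookkeeping are essential. A more pedestrian but genuine source of friction is keeping the factors of $1/d$ consistent across the trace constraint, the linear term in $\theta$, and the constant in $\nabla\theta$.
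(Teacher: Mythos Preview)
Your argument is correct and is precisely the Lagrangian-duality derivation of Ref.~\cite{Malick2004}; the paper does not supply its own proof of Theorem~\ref{thm:dual} but explicitly defers to that reference, so there is nothing to compare beyond noting that you have reproduced the intended proof. Your caveat about the $1/d$ bookkeeping is also on point: as displayed, the linear term $-\Tr Y$ in $\theta$ and the constant $-\mathbbm{1}_\mathcal{H}/d$ in $\nabla\theta$ are mutually inconsistent (the gradient of $-\Tr Y$ with respect to $Y\in\mathrm{Herm}(\mathcal{H})$ is $-\mathbbm{1}_\mathcal{H}$), and your derivation produces the consistent pair $-\tfrac{1}{d}\Tr Y$ and $-\mathbbm{1}_\mathcal{H}/d$, which alters neither the minimizer $\overline{Y}$ nor the conclusion $\widetilde{P}=\Pi(P+\mathbbm{1}_\mathcal{H}\otimes\overline{Y})$.
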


Notice that the action of $\Pi$ on a Hermitian operator $A$ can easily be evaluated once we have a spectral decomposition of $A$: simply put to zero all the negative eigenvalues while leaving eigenvectors alone.
More explicitly,
\begin{equation}
    \Pi(U \mathrm{diag}(\bm{\lambda}) U^\dagger) = U \mathrm{diag}(\max\{\bm{\lambda}, \bm{0}\}) U^\dagger,
\end{equation}
where $\bm{\lambda}$ is the vector of eigenvalues, $\mathrm{diag}(\mathbf{x})$ is the diagonal matrix having the entries of the vector $\mathbf{x}$ on the diagonal, and $\max\{\bm{\lambda}, \bm{0}\}$ is the vector having as entries $\max\{\lambda_i, 0\}$ for $\lambda_i$ component of $\bm{\lambda}$.

The advantage offered by Theorem~\ref{thm:dual} is that now we can focus on solving the unconstrained problem~\eqref{eq:Y_problem} using nonlinear convex optimizer, such as accelerated gradient descent or quasi-Newton techniques~\cite{Nocedal2006}.
Here we choose the Broyden-Fletcher-Goldfarb-Shanno (BFGS) algorithm provided by the routine \texttt{optimize.minimize} of the Python library SciPy~\cite{Virtanen2020}.
Moreover, we only need to track $d^2$ real parameters in the $Y$ space instead of $d^4$ in $X$ space.
This can be particularly relevant when working with high-dimensional systems.
Note, however, that the computational cost of the procedure is still dominated by the $\Pi$ operation, which requires a diagonalization of a $d^2 \times d^2$ matrix and approximately scales as $\mathcal{O}(d^6)$~\cite{Wendland2017}.

Up to now, it is unclear how this dual approach relates to the more recent hyperplane intersection projection algorithm, even though preliminary results can already be found in Ref.~\cite{Stepney2022}, where it is shown that the dual approach is still competitive as a state-of-art method for solving~\eqref{eq:projection-choi}.


\section{Vectorization of the Redfield equation}\label{app:vectorization}

In this appendix we discuss how to vectorize the Kossakowski form of the Redfield equation~\eqref{eq:redfield-standard} in order to allow for an easy numerical evaluation of its solution, while keeping at every stage explicit access to the Kossakowski matrix $\chi(t)$.
Start by noticing that $E^\dagger_{nm} E_{kq} = \delta_{kn} E_{mq}$, and therefore we have $H_S + H_{LS}(t) = \sum_{k,q} h_{kq}(t) E_{kq}$ with
\begin{equation}
    h_{kq}(t) = \delta_{kq}\omega_k + \sum_\ell \eta_{\ell q, \ell k}(t),
\end{equation}
where we also used the fact that $H_S = \sum_k \omega_k E_{kk}$.
With the same reasoning, we can write
\begin{equation}
    \sum_{k,q,n,m} \chi_{kq,nm}(t) \{E_{nm}^\dagger E_{kq}, \rho\} = \sum_{k,q} \widetilde{\chi}_{kq}(t) \{E_{kq}, \rho\},
\end{equation}
where
\begin{equation}
    \widetilde{\chi}_{kq}(t) = \sum_\ell \chi_{\ell q, \ell k}.
\end{equation}
We arrive at
\begin{multline}\label{eq:koss_modified}
    \frac{d}{dt} \rho(t) = \sum_{k,q,n,m} \chi_{kq,nm}(t) E_{kq} \rho(t) E_{nm}^\dagger \\
    - \sum_{k,q} \qty(i h_{kq}(t) [E_{kq}, \rho(t)] + \frac{1}{2} \widetilde{\chi}_{kq}(t) \{E_{kq}, \rho(t)\}).
\end{multline}
After expanding commutator and anticommutator, this can be rewritten as
\begin{multline}
    \frac{d}{dt} \rho(t) = \sum_{k,q,n,m} \chi_{kq,nm}(t) E_{kq} \rho(t) E_{nm}^\dagger \\
    + \sum_{k,q} \qty(\phi_{kq}(t) E_{kq} \rho(t) + \phi^*_{qk}(t) \rho(t) E_{kq}),
\end{multline}
where we introduced for convenience
\begin{equation}
    \phi_{kq}(t) \coloneqq -i h_{kq}(t) - \frac{1}{2} \widetilde{\chi}_{kq}(t).
\end{equation}

Now we introduce the row-major vectorization operator $\mathrm{vec}: \mathrm{L}(\mathcal{H}) \to \mathcal{H} \otimes \mathcal{H}$ defined by $\mathrm{vec}(e_n e_m^\dagger) = e_n \otimes e_m$ on any orthonormal basis $\{e_n\}$ of $\mathcal{H}$.
More explicitly, if $A \in \mathrm{L}(\mathcal{H})$ is given in matrix form, $\mathrm{vec}(A)$ is the vector obtained by transposing the rows of $A$ and stacking them on top of one another.
For any $A,B,X \in \mathrm{L}(\mathcal{H})$ it can be proved that
\begin{equation}
    \mathrm{vec}(AXB) = (A \otimes B^T) \mathrm{vec}(X),
\end{equation}
Using this relation together with $r(t) \coloneqq \mathrm{vec}(\rho(t))$, we obtain that Eq.~\eqref{eq:koss_modified} turns into the vectorized equation
\begin{equation}\label{eq:red_vectorized}
    \frac{d}{dt} r(t) = \mathcal{L}(t) r(t),
\end{equation}
where
\begin{equation}
    \mathcal{L}(t) = \mathcal{X}(t) + \phi(t) \otimes \mathbbm{1} + \mathbbm{1} \otimes \phi^*(t)
\end{equation}
and we introduced the matrix $\mathcal{X}(t)$ whose elements are deduced from the Kossakowski matrix as follows:
\begin{equation}
    \mel{k,n}{\mathcal{X}(t)}{q,m} = \chi_{kq,nm}(t),
\end{equation}
where $\ket{k,n} \equiv \ket{k} \otimes \ket{n}$.

Similar expressions were found in Ref.~\cite{DAbbruzzo2023} for a particular model, while these are generalized to any Redfield equation.
Eq.~\eqref{eq:red_vectorized} can be solved with any ordinary differential equation solver: for example, the results of this paper are obtained using the routine \texttt{integrate.solve\_ivp} in the Python library SciPy~\cite{Virtanen2020}, which uses by default an explicit Runge-Kutta method of order 5(4).
The state matrix $\rho(t)$ in the canonical basis $\{E_{kq}\}$ can then be found from $r(t)$ by inverting the vectorization mapping.


\section{Inaccuracy of the Redfield equation for the spin-boson model with moderate coupling} \label{app:inaccuracy}

Complementing the discussion of Sec.~\ref{subsec:spinboson}, in this appendix we show that the Redfield equation for the spin-boson model is inaccurate for moderate couplings (where positivity violation occurs), confirming the findings of Ref.~\cite{Hartmann2020}.
Specifically, we report in Fig.~\ref{fig:spin_boson_acc} the Choi distance~\eqref{eq:choi-distance} from the time-independent Redfield equation and the exact solution, using the same parameters used in Fig.~\ref{fig:spin_boson_ind}.
It is evident that this distance is sensibly different from zero in all time regions, being of order one.
We also show the performance after a Choi-proximity regularization is applied: an improvement is indeed found (as predicted by Theorem~\ref{thm:distance-from-exact}), but it is not dramatic. The same is true using the time-dependent version of the Redfield equation, as also shown in Fig.~\ref{fig:spin_boson_acc}.

\begin{figure}[b]
    \centering
    \includegraphics[width=\columnwidth]{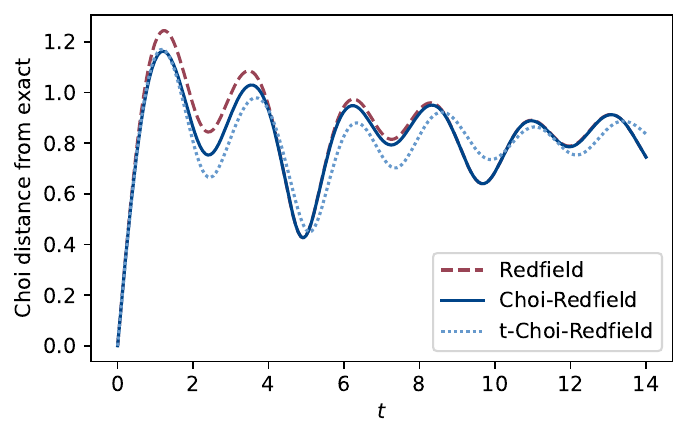}
    \caption{
        Choi distance between the exact solution of the spin-boson model in Sec.~\ref{subsec:spinboson} and various master equations: ``Redfield'' means the time-independent Redfield equation, ``Choi-Redfield'' means its Choi-proximity regularization, while ``t-Choi-Redfield'' stands for the regularized time-dependent Redfield equation.
        The parameters are the same as in Fig.~\ref{fig:spin_boson_ind}.
    }
    \label{fig:spin_boson_acc}
\end{figure}


\bibliography{bibliography.bib}

\end{document}